\theoremstyle{plain}
\newtheorem{observation}[theorem]{Observation}
\newtheorem{open}[theorem]{Open Question}
\newtheorem*{theorem*}{Theorem}
\newcommand{\definedWord}[1]{\emph{#1}}
\newcommand{\rothvoss}{Rothvoss\xspace}
\DeclareMathOperator{\perm}{perm}
\DeclareMathOperator{\Sat}{Sat}
\DeclareMathOperator{\Clow}{Clow}
\DeclareMathOperator{\Clique}{Clique}
\DeclareMathOperator{\Cut}{Cut}
\DeclareMathOperator{\New}{Newt}
\DeclareMathOperator{\xc}{xc} %
\DeclareMathOperator{\compl}{c} %
\DeclareMathOperator{\poly}{poly} %
\newcommand{\HC}{\operatorname{HC}}
\newcommand{\B}{\mathbb{B}}
\newcommand{\F}{\mathbb{F}}
\newcommand{\sharpP}{\cclass{\#P}}
\newcommand{\VP}{\cclass{VP}}
\newcommand{\VNP}{\cclass{VNP}}
\begin{document}

\begin{frontmatter}[classification=text]
\author[grochow]{Joshua A. Grochow\thanks{The author was supported during this work by an Omidyar Fellowship from the Santa Fe Institute.}}

\begin{abstract} %
In this short note, we reduce lower bounds on monotone projections of polynomials to lower bounds on extended formulations of polytopes. %
Applying our reduction to the seminal extended formulation lower bounds of Fiorini, Massar, Pokutta, Tiwari, \& de Wolf (STOC 2012; J. ACM, 2015) and
\rothvoss
(STOC 2014; J. ACM, 2017), we obtain the following interesting consequences. %
\begin{enumerate}
\item The Hamiltonian Cycle polynomial is not a monotone
subexponential-size %
projection of the permanent; this both rules out a natural attempt at a monotone lower bound on the Boolean permanent, and shows that the permanent is \emph{not} complete for non-negative polynomials in $\cclass{VNP}_{\R}$ under monotone p-projections.
\item The cut polynomials and the perfect matching polynomial (or ``unsigned Pfaffian'') are not monotone p-projections of the permanent. The latter, over the Boolean and-or semi-ring, rules out monotone reductions in one of the natural approaches to reducing perfect matchings in general graphs to perfect matchings in bipartite graphs. 
\end{enumerate}
As the permanent is universal for monotone formulas, these results also imply exponential lower bounds on the monotone formula size and monotone circuit size of these polynomials.

\end{abstract}

%
%
%
%
%
%
%
%
%
%

%

\end{frontmatter}

\section{Introduction} \label{sec:intro}
The permanent
\[
\perm_n(X) = \sum_{\pi \in S_n} x_{1,\pi(1)} x_{2,\pi(2)} \dotsb x_{n,\pi(n)}
\]
(where $S_n$ denotes the symmetric group of all permutations of $\{1,\dotsc,n\}$) has long fascinated combinatorists~\cite{minc,vanLintWilson,muirMetzler}, %
more recently physicists~\cite{permQuantum,linearOptics}, and, %
since Valiant's seminal paper~\cite{valiant}, has also been a key object of study in computational complexity. Despite its beauty, the permanent has some computational quirks: in particular, although the permanent of integer matrices is $\sharpP$-complete and the permanent is $\VNP$-complete in characteristic zero, the permanent \emph{mod 2} is the same as the determinant, and hence can easily be computed. In fact, computing the permanent mod $2^k$ is easy for any $k$ \cite{valiant}, though the proof is more involved. 
Modulo any odd number $n$, the permanent of integer matrices is $\cclass{Mod}_n\cclass{P}$-complete~\cite{valiant}.  %

In contrast, the seemingly similar Hamiltonian Cycle polynomial, 
\[
\HC_n(X) = \sum_{\text{$n$-cycles } \sigma} x_{1,\sigma(1)} x_{2,\sigma(2)} \dotsb x_{n,\sigma(n)}\,,
\]
where the sum is only over $n$-cycles rather than over all permutations, does not have these quirks: the %
Hamiltonian Cycle polynomial is $\VNP$-complete over any ring $R$~\cite{valiant2} and $\cclass{Mod}_n\cclass{P}$-complete for all $n$ (that is, counting Hamiltonian cycles is complete for these Boolean counting classes).

Jukna~\cite{juknaQ} observed that, over the Boolean semi-ring, if the Hamiltonian Cycle polynomial were a monotone p-projection of the permanent, there would be a $2^{n^{\Omega(1)}}$ lower bound on monotone circuits computing the permanent, a lower bound that still remains 
open. (The current record is still Razborov's 
$n^{\Omega(\log n)}$~\cite{razborov}.) 
Even over the real numbers, such a monotone p-projection would give an alternative proof of a $2^{n^{\Omega(1)}}$ lower bound on the
permanent.  (Jerrum and Snir~\cite{jerrumSnir} already showed the
permanent requires monotone circuits of size $2^{\Omega(n)}$ over
$\R$ and over the tropical $(\min,+)$ semi-ring.)
Here, by building on Fiorini \emph{et al.}'s~\cite{EF} and
\rothvoss's~\cite{rothvoss}
extended formulation lower bound for the TSP polytope, %
 we show that no such monotone reduction exists---over $\R$, nor over the tropical semi-ring, nor over the Boolean semi-ring---by connecting monotone p-projections to extended formulations of polytopes. %

In the past five %
years, there has been exciting progress on extended formulations of
polytopes,  %
which we leverage by using our new connection. Indeed, with this connection in hand, one immediately gets a monotone projection lower bound from essentially \emph{any} lower bound on extended formulations. An \emph{extended formulation} of a polytope $P$ %
is another %
polytope in a higher-dimensional space that projects down onto $P$ %
by an affine linear map. Since linear programming can be solved in polynomial time, and optima of linear programs are preserved by affine linear projections, solving the LP on the extended formulation allows one to solve the LP on the original polytope. Thus, if a polytope $P$ has an extended formulation that is in some sense ``small,'' one can solve LP optimization problems over $P$ by instead solving them over its smaller extended formulation; if the extended formulation is small enough, this would yield a polynomial-time algorithm. To show that such an extended formulation is not small, it suffices to prove a lower bound on its number of facets. In 1988, Yannakakis~\cite{yan} ruled out a large and natural class of extended formulations of the TSP polytope---so-called \emph{symmetric} extended formulations---thus showing that a certain natural class of algorithms for an $\cclass{NP}$-complete problem indeed did not solve it in polynomial time, and ruling out several attempted proofs that $\cclass{P} = \cclass{NP}$. But for more than 20 years, it was an open question of how to remove the condition of symmetry from Yannakakis's result. In a landmark result, Fiorini, Massar, Pokutta, Tiwary, and de Wolf~\cite{EF} achieved this, by showing an exponential lower bound on \emph{arbitrary} extended formulations of several different polytopes. We will use their lower bound on the cut polytope below.
\rothvoss
\cite{rothvoss} then showed such lower bounds on several other polytopes; we use his lower bound on the TSP polytope, improving that of Fiorini \emph{et al.} \cite{EF}, to get the result above.

We use the same connection to extended formulations, now in combination with
\rothvoss's
lower bound on extended formulations of the perfect matching polytope~\cite{rothvoss}, to show that the perfect matching polynomial or ``unsigned Pfaffian''
\[
\frac{1}{2^n n!} \sum_{\pi \in S_{2n}} \prod_{i=1}^{n} x_{\pi(2i-1), \pi(2i)} = \sum_{\substack{\pi \in S_{2n} \\ \pi(1) < \pi(3) < \dotsb < \pi(2n-1) \\ \pi(2k-1) < \pi(2k) \; \forall k}} \prod_{i=1}^{n} x_{\pi(2i-1), \pi(2i)}
\]
is not a monotone p-projection of the permanent. 
As the perfect matching polynomial counts perfect matchings in a general graph, and the permanent counts perfect matchings in a bipartite graph, 
it is interesting to consider this result in the context of the difference in complication between algorithms for finding %
perfect matchings in bipartite graphs (\eg, \cite{MVV, FGT}) and those for finding %
perfect matchings in general graphs (\eg, \cite{edmonds, MV, vazirani, ST}).

\begin{remark}[On the Boolean semi-ring]
Our results also hold for \emph{formal polynomials} over the Boolean semi-ring $\B = (\{0,1\}, \vee, \wedge)$. Over the Boolean semi-ring, the permanent is the indicator function of the existence of a perfect matching in a bipartite graph, and the unsigned Pfaffian is the indicator function of a perfect matching in a general graph. 
However, over $\B$, each function is represented by more than one formal polynomial, and we do not 
know how to extend our results to the setting of \emph{functions} over $\B$. See \expref{Section}{sec:open} for details and specific questions.
\end{remark}

We also use the same technique, this time in combination with the lower bound of Fiorini \emph{et al.} on extended formulations of the cut polytope~\cite{EF}, to show that the cut polynomials
\[
  \Cut^q = \sum_{A \subseteq [n]} \prod_{i \in A, j \notin A} x_{ij}^{q-1}
\]
are not monotone p-projections of the permanent. Perhaps the main
complexity-theoretic interest in the cut polynomials is that $\Cut^q$
\emph{over the finite field $\F_q$} was (until recently~\cite{MS}) the
only known example of a natural polynomial that is neither 
expected to be in $\VP_{\F_q}$ nor to be $\VNP_{\F_q}$-complete. 
(Indeed, either its membership in $\VP_{\F_q}$ \emph{or} its $\VNP_{\F_q}$-completeness would imply the collapse of the polynomial-time hierarchy~\cite{burgisser}, contradicting a standard complexity-theoretic hypothesis.)
There B\"{u}rgisser also showed that if %
$\VP_{\F_q} \neq \VNP_{\F_q}$ then such polynomials of intermediate
complexity must exist. In that paper, B\"{u}rgisser asked whether the cut %
polynomials, considered as polynomials \emph{over the rationals}, were
$\VNP_{\Q}$-complete.\footnote{$\Cut^2$ was subsequently shown to be
  $\VNP_{\Q}$-complete under circuit reductions~\cite{cut}; its
  completeness under projections remains open.}  Although our results
don't touch on this question, these previous results motivate the
study of these polynomials over $\Q$.

Because the permanent is universal for monotone formulas, our lower bounds also imply exponential lower bounds on the monotone algebraic formula size---and, by balancing algebraic circuits, monotone algebraic circuit size---of these polynomials; see \expref{Section}{sec:formula}.

Finally, we note that our results shed a little more light on the intricacy %
 of the known $\VNP$-completeness proofs for the permanent~\cite{valiant,benDorHalevi,aaronson}. %
Namely, prior to our result, the fact that the permanent is not hard modulo $2$ already implied that any completeness result must use $2$ in a ``bad'' way: for example, dividing by 2 somewhere. This is indeed true of %
Valiant's original proof~\cite{valiant},
Ben-Dor \& Halevi's proof~\cite{benDorHalevi}, %
 and Aaronson's quantum linear-optics proof~\cite{aaronson}. %

\begin{remark}
One might hope for a classical analogue of Aaronson's quantum proof, using the characterization of $\cclass{BPP}$ in terms of stochastic matrices as a replacement for the characterization of $\cclass{BQP}$ using unitary matrices. However, our result
indicates
that the most straightforward adaptation of Aaronson's proof from the $\cclass{BQP}$ setting to the $\cclass{BPP}$ setting cannot work, as the use of stochastic matrices in this manner would produce a monotone reduction.
\end{remark}

Our results also imply the necessity of the use of negative numbers in Valiant's  $4 \times 4$ gadget~\cite[p.~195]{valiant} and Ben-Dor and Halevi's $7 \times 7$ gadget~\cite[App.~A]{benDorHalevi}. %
In light of these results, Valiant's $4 \times 4$ gadget may perhaps seem less mysterious than the fact that such a gadget exists that is \emph{only} $4 \times 4$! 

To prove these results, we show that a monotone projection between non-negative polynomials essentially implies that the Newton polytope of one polynomial is an extension of the Newton polytope of the other (\expref{Lemma}{lem:main}), and then apply known lower bounds on the extension complexity of certain polytopes. We hope that the connection between Newton polytopes, monotone projections, and extended formulations finds further use.

\section{Preliminaries} \label{sec:prelim}
A polynomial $f(x_1,\dotsc,x_n)$ is a \definedWord{(simple) projection} of a polynomial $g(y_1, \dotsc, y_m)$ if $f$ can be constructed from $g$ by replacing each $y_i$ with a constant or with some $x_j$. The polynomial $f$ is an \definedWord{affine projection} of $g$ if $f$ can be constructed from $g$ by replacing each $y_i$ with an affine linear function $\pi_i(\vec{x})$. When we say ``projection'' we mean simple projection. Given two families of 
polynomials, $(f_n)$ and $(g_n)$, 
if there is a function $p(n)$ such that $f_n$ is a projection of $g_{p(n)}$ for all sufficiently large $n$, then we say that $(f_n)$ is a projection of $(g_n)$ with \definedWord{blow-up} $p(n)$. If $(f_n)$ is a projection of $(g_n)$ with polynomial blow-up, we say that $(f_n)$ is a \definedWord{p-projection} of $(g_n)$.

Over any subring of $\R$---or more generally any totally ordered semi-ring (see below)---a \definedWord{monotone projection} is a projection in which all constants appearing in the projection are non-negative. %
Monotone p-projection is defined analogously.

To each monomial $x_1^{e_1} \dotsb x_n^{e_n}$ we associate its exponent vector $(e_1, \dotsc, e_n)$, as a point in $\N^n \subseteq \R^n$.
These vectors determine the Newton polytope, defined as follows.
\begin{definition}
The \definedWord{Newton polytope} of a polynomial $f(x_1, \dotsc, x_n)$, denoted $\New(f)$, is the convex hull in $\R^n$ of the exponent vectors of all monomials appearing in $f$ with non-zero coefficient.
\end{definition}

A polytope is \definedWord{integral} if all its vertices have integer coordinates; note that Newton polytopes are always integral. A \definedWord{face} of a polytope $P$ is the intersection of $P$ with a linear space $L$ such that none of the interior points of $P$ lie in $L$.

For a polytope $P$, let $\compl(P)$ denote the ``complexity'' of $P$, as measured by the minimal number of linear inequalities needed to define $P$ (equivalently, the number of faces of $P$ of dimension $\dim P -1$). A polytope $Q \subseteq \R^m$ is an \definedWord{extension} or \definedWord{extended formulation} %
of $P \subseteq \R^n$ if there is an affine linear map $\ell\colon \R^m \to \R^n$ such that $\ell(Q) = P$. The \definedWord{extension complexity} of $P$, denoted $\xc(P)$, is the minimum complexity of any extension of $P$ (of any dimension): $\xc(P) = \min \{ \compl(Q) \mid \text{$Q$ is an extension of $P$}\}$.

The $m$-th \definedWord{cycle cover polytope} (also known as the bipartite perfect matching polytope) is the convex hull in $\R^{m^2}$ of the $\{0,1\}$-indicator functions of the directed cycle covers of the complete directed graph with self-loops on $m$ vertices. The cycle cover polytope is the Newton polytope of the permanent, as each monomial in the permanent corresponds to such a cycle cover. 

A totally ordered semi-ring (we only consider commutative ones here) is a totally ordered set together with two operations, denoted $(R,\leq,\times,+,1,0)$ such that $(R,\times,1)$ and $(R,+,0)$ are both commutative monoids, $\times$ distributes over $+$, $0 \times a=0$ for all $a$, $a+c \leq b + c$ whenever $a \leq b$, and $ac \leq bc$ whenever $a \leq b$ and $0 \leq c$. An element $c$ of a totally ordered semi-ring is non-negative if $0 \leq c$, and is positive if furthermore $c \neq 0$. We will restrict our attention to \emph{non-zero} totally ordered semi-rings; equivalently, we assume $1 \neq 0$.
Note that in a totally ordered semiring, $1+ \dotsb +1$ is never zero.%

The following totally ordered semi-rings are of particular 
interest.   %
\begin{itemize}
\item The real numbers with its usual ordering and
arithmetic %
operations, $(\R, \leq, \times, +)$. %

\item The so-called ``tropical semi-ring'' $(\R, \leq, +, \min)$, which is the real numbers with its usual ordering, where the product is taken to be real addition and the addition operation is taken to be the minimum.

\item The Boolean and-or semi-ring $\B=(\{0,1\}, \leq, \wedge, \vee)$, where $0 \leq 1$. 
\end{itemize}
To get a feel for the latter two semi-rings, note that polynomials over the tropical semi-ring generally compute some optimization problem, and over $\B$ generally compute a decision problem. For example, the Hamiltonian Cycle polynomial over the tropical semi-ring computes the Traveling Salesperson Problem, and over 
$\B$, the indicator function of the existence of a Hamiltonian cycle. Note that over $\R$, if two formal polynomials compute the same function then they must be identical, but this is not true over the tropical 
or Boolean semi-rings.

\section{Main lemma}
\begin{lemma} \label{lem:main}
Let $R$ be a totally ordered semi-ring, and let $f(x_1,\dotsc, x_n)$ and $g(y_1, \dotsc, y_m)$ be polynomials over $R$ with non-negative coefficients. If $f$ is a monotone projection of $g$, then some face of $\New(g)$ is an extension of $\New(f)$. In particular, $\xc(\New(f)) \leq \compl(\New(g))$.
\end{lemma}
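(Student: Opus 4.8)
The plan is to track how the monomials of $f$ arise from the monomials of $g$ under the substitution. Write the monotone projection as $y_i \mapsto \sigma_i$, where each $\sigma_i$ is either a non-negative constant or a variable $x_{j(i)}$; since all coefficients of $g$ are non-negative and the constants used are non-negative, no cancellation occurs when we expand $g(\sigma_1,\dotsc,\sigma_m)$, so $f$'s monomials are exactly the distinct monomials obtained from $g$'s monomials after the substitution, each with a strictly positive coefficient. Group the indices: let $C \subseteq [m]$ be the set of $i$ for which $\sigma_i$ is a (positive) constant and $Z \subseteq [m]$ the set of $i$ for which $\sigma_i = 0$; the remaining indices are mapped to variables. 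A monomial $\prod_i y_i^{e_i}$ of $g$ survives the substitution exactly when $e_i = 0$ for all $i \in Z$; the monomials of $g$ with this property are precisely the lattice points of $\New(g)$ lying in the face $F = \New(g) \cap \{ u : u_i = 0 \text{ for } i \in Z\}$ (this is a face because it is the set of points minimizing the linear functional $\sum_{i \in Z} u_i$, which is nonnegative on all of $\New(g)$). So the surviving monomials of $g$ are exactly the integer points of the face $F$.

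Next I would define the affine linear map $\ell \colon \R^m \to \R^n$ that realizes the extension. For a monomial of $g$ with exponent vector $u = (u_1,\dotsc,u_m)$, after substituting $y_i \mapsto x_{j(i)}$ for $i \notin C \cup Z$ (and $y_i \mapsto$ constant for $i \in C$, $y_i \mapsto 0$ for $i \in Z$), the resulting monomial in the $x$'s has, in variable $x_k$, exponent $\sum_{i : j(i) = k} u_i$. This is manifestly a linear function of $u$; call the corresponding linear map $L \colon \R^m \to \R^n$, so that each surviving $g$-monomial $u$ maps to the exponent vector $L(u)$ of the corresponding $f$-monomial. Since every monomial of $f$ comes from at least one surviving monomial of $g$, and conversely every surviving monomial of $g$ maps to a monomial of $f$, we have $L(\,\{\text{integer points of }F\}\,) = \{\text{exponent vectors of monomials of } f\}$. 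Taking convex hulls and using that $F$ is an integral polytope (being a face of the integral polytope $\New(g)$, it is itself integral, so its integer points are exactly its vertices' convex combinations in the relevant sense — more precisely $F$ equals the convex hull of its integer points because $\New(g)$ is integral), we get $L(F) = \New(f)$. Hence $F$ is an extension of $\New(f)$, which is the first claim.

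For the ``in particular'' clause, I would invoke the standard fact that a face of a polytope has extension complexity at most that of the polytope: $xc(F) \le c(\New(g))$, since $F$ is cut out from $\New(g)$ by setting some of the defining inequalities of the ambient nonnegativity constraints to equality — more simply, $\New(g)$ itself is an extension of any of its faces (project via the identity, intersecting as needed), so $c(F) \le$ (number of facets of $\New(g)$) in the relevant bound, giving $xc(F) \le c(\New(g))$. Combining with the fact that $F$ is an extension of $\New(f)$ and that extension complexity is transitive (a composition of affine projections is an affine projection), we conclude $xc(\New(f)) \le xc(F) \le c(\New(g))$.

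**Main obstacle.** The routine parts are the linearity of $L$ and the absence of cancellation. The one step that needs genuine care is the claim that $F$ is a genuine \emph{face} of $\New(g)$ and that $L(F) = \New(f)$ as \emph{polytopes} rather than just as sets of lattice points: this requires knowing that $\New(g)$ (hence $F$) is integral so that the polytope is the convex hull of its integer points, and then commuting the affine map $L$ with the convex-hull operation. I expect the bookkeeping around which indices are constants versus zeros versus variables, and verifying that the constant substitutions (which only scale coefficients, never change exponents) do not affect the Newton polytope, to be where most of the care goes; the polytope-theoretic facts themselves are standard.
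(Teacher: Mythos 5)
Your proposal follows essentially the same approach as the paper: identify the zero-set of the projection, show that the corresponding coordinate subspace cuts out a \emph{face} $F$ of $\New(g)$ (because $\New(g)$ lies in the nonnegative orthant), observe that the monotone, cancellation-free substitution induces a linear map $L$ on exponent vectors, and conclude $L(F) = \New(f)$.

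One small inaccuracy worth correcting: the surviving monomials of $g$ are \emph{not} all integer points of $F$, only those integer points that happen to be exponent vectors of $g$ — a proper subset in general (e.g.\ for $g = y_1^2 + y_2^2$, the point $(1,1)$ lies in $\New(g)$ but is not an exponent vector of $g$), so the displayed equality $L(\{\text{integer points of }F\}) = \{\text{exponent vectors of } f\}$ is literally false. The fix requires no integrality at all: since $\New(g) = \mathrm{conv}(V)$ where $V$ is the set of exponent vectors of $g$, and $F$ is a face, one has $F = \mathrm{conv}(V \cap F)$ by the standard fact that a face of $\mathrm{conv}(V)$ is the convex hull of the generating points lying on it. Then $L(F) = L(\mathrm{conv}(V\cap F)) = \mathrm{conv}(L(V\cap F)) = \New(f)$, with no detour through lattice points. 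This is exactly the route the paper's proof takes.
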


\begin{proof}
Under simple projections, a monomial in the
$y_i$  %
maps to some scalar multiple of a monomial in the 
$x_j$  %
(possibly the empty monomial, resulting in a constant term, or possibly the zero multiple, resulting in zero). Let $\pi$ be a monotone projection map, defined on the variables $y_i$, and extended naturally to monomials and terms in the $y_i$. %
(Recall that a \definedWord{term} of a polynomial is a monomial together with its coefficient.) Since each term $t$ of $g$ is a monomial multiplied by a positive coefficient, and since $\pi$ is non-negative, $\pi(t)$ is either zero or a single monomial in the $x_j$ %
 with nonzero coefficient. The former situation can happen only if $t$ contains some variable $y_i$ such that $\pi(y_i)=0$. Let $\ker(\pi)$ denote the set $\{y_i \mid \pi(y_i) = 0\}$. Thus, for every term $t$ of $g$ that is disjoint from $\ker(\pi)$, $\pi(t)$ actually appears in $f$---possibly with a different coefficient, but still non-zero---since no %
 terms can cancel under projection by $\pi$. 

Let $e_1,\dotsc,e_m$ be the coordinate functions on $\R^m$, %
the ambient space of $\New(g)$; that is, $e_i\colon \R^m \to \R$ is projection onto the $i$-th coordinate. %
Let $K$ denote the subspace of $\R^m$ defined by the equations $e_i = 0$ for each $i$ such that $y_i \in \ker(\pi)$. Let $P$ be the intersection of $\New(g)$ with $K$, considered as a polytope in $K$; since all vertices of $\New(g)$ are non-negative, intersecting $\New(g)$ with a coordinate hyperplane, $e_i = 0$, results in a face of $\New(g)$, and thus $P$ is a face of $\New(g)$. Note that $P$ is exactly the convex hull of the exponent vectors of monomials in $g$ that are disjoint from $\ker(\pi)$. In particular, since $\pi$ is multiplicative on monomials, it induces a \emph{linear} map $\ell_\pi$ from $K$ to $\R^n$ (the ambient space of $\New(f)$). By the previous paragraph, the exponent vectors of $f$ are exactly $\ell_\pi$ applied to the exponent vectors of monomials in $g$ that are disjoint from $\ker(\pi)$. By the linearity of $\ell_\pi$ and the convexity of $P$ and $\New(f)$, we have that $\New(f) = \ell_\pi(P)$, so $P$ is an extension of $\New(f)$. Since $P$ is defined by intersecting $\New(g)$ with additional linear equations, the lemma follows.
\end{proof}

Several partial converses to our Main Lemma also hold. Perhaps the most natural and interesting of these is the following. %

\begin{observation}
Let $R$ be a totally ordered semi-ring. Given any sequence of non-negative %
integral polytopes $P_n \subseteq \R^n$ %
such that the $\poly(n)$-th cycle cover polytope is an extension of $P_n$ along an affine linear map $\ell_n\colon \R^{\poly(n)} \to \R^n$ with integer coefficients of polynomial bit-length, there is a sequence of polynomials $(f_n) \in \cclass{VNP}_R$ such that $\New(f_n) = P_n$ and $f$ is a monotone p-projection of the permanent.
\end{observation}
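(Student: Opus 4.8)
The plan is to realize the affine map $\ell_n$ itself by a \emph{monotone} projection, at the cost of a polynomial blow-up, so that the resulting projection of a suitably enlarged permanent is a polynomial $f_n$ with $\New(f_n) = P_n$; membership in $\cc{VNP}_R$ then comes for free. Write $\ell_n(z) = Az + b$ where $A$ is an integer $n \times m^2$ matrix with columns indexed by the arcs $(i,j)$ of the complete digraph with self-loops on $[m]$, $b$ is an integer vector, and $m = \poly(n)$ is the order of the cycle-cover polytope $C_m = \New(\perm_m) \subseteq \R^{m^2}$; assume for now that $A$ and $b$ are non-negative with polynomially bounded entries (I return to this in the last paragraph). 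For $\sigma \in S_m$ let $\mathbf{1}_\sigma \in \{0,1\}^{m^2}$ be the indicator vector of the cycle cover $\{(i,\sigma(i))\}$. I will construct a labelled digraph $G$ on $m' = \poly(n)$ vertices whose (weighted) cycle covers are in bijection with $S_m$, the cover of $\sigma$ having weight the monomial $x^{A\mathbf{1}_\sigma + b}$. Reading off the labelled adjacency matrix of $G$ (with the constant $0$ wherever $G$ has no arc) as a monotone projection $\pi$ of $\perm_{m'}$ then gives $f_n := \pi(\perm_{m'}) = \sum_{\sigma \in S_m}(\text{positive coeff})\cdot x^{A\mathbf{1}_\sigma + b}$, so that $\New(f_n) = \mathrm{conv}\{A\mathbf{1}_\sigma + b : \sigma \in S_m\} = \ell_n(\mathrm{conv}\{\mathbf{1}_\sigma : \sigma\}) = \ell_n(C_m) = P_n$, as required.

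The gadget construction is as follows. Keep the $m$ ``hub'' vertices $[m]$. Replace each arc $(i,j)$ of the complete digraph on $[m]$ (self-loops included) by a directed path from $i$ to $j$ through $\|A_{\cdot,(i,j)}\|_1 - 1$ fresh internal vertices, the arcs of this path carrying the variables $x_1,\dotsc,x_n$ with $x_k$ used exactly $A_{k,(i,j)}$ times (if $A_{\cdot,(i,j)} = 0$, simply keep a single arc $i\to j$ labelled by the constant $1$). Give every internal vertex a self-loop labelled by the constant $1$. Finally, if $b\neq 0$, adjoin one ``booster'' directed cycle through $\|b\|_1$ further fresh vertices whose arcs carry $x_k$ exactly $b_k$ times, these vertices being incident to no other arcs. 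A routine check shows that in any cycle cover of $G$ the hubs exit through a unique permutation pattern $\sigma \in S_m$, each path-gadget is then either fully traversed (if chosen by $\sigma$) or fully bypassed by its internal self-loops, and the booster cycle is forced; the self-loops, the bypassed gadgets, and the choice of gadget all contribute only the constant $1$, so the weight of this cover is exactly $x^{A\mathbf{1}_\sigma + b}$ and the correspondence with $S_m$ is a bijection. Since $R$ is totally ordered and every coefficient of $\perm_{m'}$ and every constant used in $\pi$ is non-negative, no two terms can cancel (as in the proof of Lemma~\ref{lem:main}); hence $f_n$ has exactly the monomials $x^{A\mathbf{1}_\sigma + b}$, each with coefficient a nonzero sum of $1$'s, and $\New(f_n) = P_n$.

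It remains to observe that $\pi$ uses only the constants $0$ and $1$, so it is monotone, and that $m' \leq m + \sum_{(i,j)}\max(0, \|A_{\cdot,(i,j)}\|_1 - 1) + \|b\|_1$ is polynomial in $n$ under our boundedness assumption on $A$ and $b$; thus $(f_n)$ is a monotone p-projection of the permanent. Then $f_n \in \cc{VNP}_R$ because the permanent lies in $\cc{VNP}_R$ and $\cc{VNP}_R$ is closed under (monotone) p-projections. Through the lens of Lemma~\ref{lem:main} what this does is exhibit the face of $C_{m'}$ cut out by the equations $z_{uv}=0$ for the non-arcs of $G$ --- which is precisely the cycle-cover polytope of $G$ --- as an explicit extension of $P_n$ along the linear map induced by $\pi$.

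The one place to be careful, and the ``obstacle'' of the argument, is exactly the passage from the arbitrary affine map $\ell_n$ to one that arises from a monotone projection, i.e.\ a map whose matrix has columns that are standard basis vectors or zero (this is the rigidity used in the proof of Lemma~\ref{lem:main}): integer scaling is absorbed into path lengths and the affine shift into the booster cycle, but this is only polynomially cheap when the coefficients of $\ell_n$ are polynomially \emph{bounded}. This is in fact automatic here, since $\New(f_n) = P_n$ must have all coordinates at most $\deg f_n = \poly(n)$; so the statement should really be read with ``polynomially bounded coefficients'' (if one only assumes polynomial bit-length, $P_n$ could have exponentially large coordinates and then no polynomial-degree $f_n$ could have it as a Newton polytope). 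Negative entries of $A$ with $\ell_n(C_m)$ still in the non-negative orthant can be eliminated first by using the identities $\sum_j z_{ij} = 1$ valid on $C_m$ to rewrite $\ell_n$ with non-negative linear part, or may simply be excluded, as none of our applications of this Observation require them.
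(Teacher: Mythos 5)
Your construction is correct, and it is essentially the explicit version of what the paper's proof gestures at. The paper defines $f_n := \sum_{\vec{e}\in V_m}\vec{y}^{\ell_n(\vec{e})}$ directly, observes $\New(f_n)=P_n$ (using the characteristic-zero remark from the preliminaries, which is the same non-cancellation fact you invoke), notes that $f_n$ is a monotone \emph{monomial} projection of $\perm_m$ via $x_{ij}\mapsto\vec{y}^{\ell_n(e_{ij})}$, and then says this can be made a \emph{simple} projection ``using the universality of the permanent and repeated squaring.'' Your path-gadget with internal self-loops, together with the booster cycle for the affine shift $b$, is precisely a concrete realization of that last step; the two proofs are the same strategy, just at different levels of explicitness. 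Your closing remark, reading the construction back through Lemma~\ref{lem:main} as exhibiting a face of $C_{m'}$ extending $P_n$, is a nice sanity check and matches the paper's intent.

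Where you go beyond the paper is in flagging the bit-length versus magnitude issue, and you are right to do so. A simple projection of $\perm_m$ has degree at most $m$, and any $f_n$ with $\New(f_n)=P_n$ has degree at least the largest $\ell_1$-norm of a vertex of $P_n$; so the conclusion $(f_n)\in\cc{VNP}_R$ and ``monotone p-projection'' are simply unattainable when $P_n$ has superpolynomially large coordinates, regardless of how one encodes $\ell_n$. Repeated squaring cannot rescue this: it shortens circuits, but the permanent-size needed for a projection is governed by degree, not by circuit size. Thus the paper's claimed bound $\poly(m(n),b(n))$ in terms of bit-length $b(n)$ is an overclaim, and the hypothesis should be read, as you say, with polynomially bounded magnitudes. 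Note also that once $A$ and $b$ are non-negative, bounded vertices force bounded entries: since for every arc $(i,j)$ some cycle cover uses it, $A_{k,(i,j)}\le A_{k,\cdot}\mathbf{1}_\sigma + b_k \le \max_{\vec v\in P_n} v_k$, and likewise $b_k\le \max v_k$.

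The one step you assert without proof is that negative entries of $A$ can always be removed using the equalities $\sum_j z_{ij}=1$ and $\sum_i z_{ij}=1$ valid on $C_m$, assuming only $\ell_n(C_m)\subseteq\R^n_{\ge 0}$. This does hold, but it deserves a line: fixing a coordinate $k$, you want integers $\alpha_i,\beta_j$ with $a_{ij}+\alpha_i+\beta_j\ge 0$ for all $(i,j)$ and $b_k-\sum_i\alpha_i-\sum_j\beta_j\ge 0$; the feasibility of the first system while minimizing $\sum\alpha_i+\sum\beta_j$ is exactly the dual of the assignment LP, whose optimum equals $-\min_\sigma\sum_i a_{i,\sigma(i)}$ and is attained at an integral point, and the resulting constant term is $\min_\sigma \bigl(\sum_i a_{i,\sigma(i)}+b_k\bigr)=\min_{\vec e\in V_m}\ell_n(\vec e)_k\ge 0$. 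With this filled in, your proof is complete and, in the bit-length matter, more careful than the paper's.
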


\begin{proof}
Let $C_m$ denote the $m$-th cycle cover polytope, let $m(n)$ be a polynomial such that $C_{m(n)}$ is an extended formulation 
of $P_n$, and let $b(n)$ be a polynomial upper bound on the bit-length of the coefficients of $\ell_n$. Let $V_m$ denote the vertex set of the cycle cover polytope, \ie, the incidence vectors of cycle covers. Define $f_n$ as
\[
  \sum_{\vec{e} \in V_m} \vec{y}^{\ell_n(\vec{e})}\,, 
\]
where $\vec{y}=(y_1,\dotsc,y_n)$, and the vector notation $\vec{y}^{\vec{e}'}$ is defined as $y_1^{e_1'} y_2^{e_2'} \dotsb y_n^{e_n'}$. Note that $\ell_n$ has only integer coefficients by assumption, and each $\vec{e} \in V_m$ is integral, so the vector $\ell_n(\vec{e})$ is integral, and the above expression is a well-defined Laurent polynomial. %
To see that all the exponents are non-negative, note that by assumption each $\ell_n(\vec{e})$ lies in $P_n$, which is itself a non-negative polytope, so the above expression is in fact a well-defined \emph{polynomial}. %
By construction, every exponent vector of $f_n$ is in $\ell_n(C_{m(n)}) = P_n$. Conversely, every vertex of $P_n$ is an exponent vector of $f_n$, since its coefficient is simply $1 + 1 + \dotsb + 1$, which is never zero in a totally ordered semi-ring. %
(Without noting this, it would be possible that $k$ distinct vertices in $V_m$ would get mapped to the same point under $\ell_n$, %
and then the corresponding monomials $\vec{y}^{\ell_n(\vec{e})}$ might add up to $0$ in $f_n$.) %
Thus $\New(f_n) = P_n$. Furthermore, $f_n$ is a monotone \emph{nonlinear} projection of the permanent using the map $x_{ij} \mapsto \vec{y}^{\ell((0,0,\dotsc,1,\dotsc,0))}$, where the $1$ is in the $(i,j)$ position. Using the universality of the permanent and repeated squaring, this can easily be turned into a monotone \emph{simple} projection of the permanent of size $\poly(m(n), b(n))$. 
\end{proof}

This can be generalized from the cycle cover polytopes and the permanent to arbitrary integral polytopes and the natural associated polynomial (the sum over all monomials whose exponent vectors are vertices of the polytope), but at the price of using ``monomial projections''---in which each variable is replaced by a monomial---rather than simple projections. There ought to be a version of this observation over sufficiently large fields and allowing rational coefficients in $\ell$, using Strassen's division trick~\cite{strassen}, but the only such versions the author could come up with had so many hypotheses as to seem uninteresting.

\section{Applications}

\subsection{Projection lower bounds}

\begin{remark} 
The following theorems hold over any totally ordered semi-ring, including the Boolean and-or semi-ring, the non-negative real numbers under multiplication and addition, and the tropical semi-ring of real numbers under addition and $\min$. To see that this introduces no additional difficulty, note that over any totally ordered semi-ring $R$, the Newton polytope of a polynomial over $R$ is still a polytope in a vector space over the real numbers, so standard results on polytopes and the cited results on extension complexity still apply.
\end{remark}

\begin{theorem}[Main Lemma +
\rothvoss's
TSP polytope lower bound] \label{thm:HC} %
Over any totally ordered semi-ring, the Hamiltonian Cycle polynomial is not a monotone affine p-projection of the permanent; in fact, any monotone affine projection from the permanent to the Hamiltonian Cycle polynomial has blow-up at least $2^{\Omega(n)}$. 
\end{theorem}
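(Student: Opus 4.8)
The plan is to combine the Main Lemma (Lemma~\ref{lem:main}) with two extremal facts about the Newton polytopes involved. First I would identify the polytopes. $\New(\perm_m)$ is the $m$-th cycle cover polytope, i.e.\ the Birkhoff polytope of doubly stochastic $m \times m$ matrices; it lives in the affine subspace cut out by the row- and column-sum equations and is bounded by only the $m^2$ facet inequalities $x_{ij} \ge 0$, so $c(\New(\perm_m)) = O(m^2)$. On the other hand, $\New(HC_n)$ is the convex hull of the $\{0,1\}$-incidence vectors of the directed Hamiltonian cycles on $n$ vertices---the asymmetric traveling salesman polytope on $n$ cities---and its extension complexity is $2^{\Omega(n)}$ by the extended-formulation lower bounds of Fiorini, Massar, Pokutta, Tiwary and de Wolf (via the correlation / cut polytope). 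Granting these, Lemma~\ref{lem:main} already settles the case of \emph{simple} monotone projections: if $HC_n$ were a monotone simple projection of $\perm_{p(n)}$, then $2^{\Omega(n)} = xc(\New(HC_n)) \le c(\New(\perm_{p(n)})) = O(p(n)^2)$, forcing $p(n) = 2^{\Omega(n)}$.

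To pass from simple to \emph{affine} projections I would reduce the latter to the former using standard (monotone) permanent gadgetry. A monotone affine projection exhibits $HC_n$ as $\perm_m(M)$ where $M_{ij} = a_{ij} + \sum_k b_{ij,k} x_k$ with all coefficients non-negative; each such entry can be simulated inside a larger permanent by keeping the constant $a_{ij}$ on the arc $i \to j$ and, for each term $b_{ij,k} x_k$, attaching a fresh ``detour'' vertex $v$ carrying a weight-$1$ self-loop together with arcs $i \to v$ of weight $b_{ij,k}$ and $v \to j$ of weight $x_k$; summing over cycle covers of the enlarged graph reproduces $\perm_m(M)$ exactly, since in each cycle cover either the direct arc is used or exactly one detour is, and every unused detour contributes only the factor $1$. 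This exhibits $HC_n$ as a monotone \emph{simple} projection of $\perm_{m'}$ with $m' = m + O(m^2 n) = \poly(p(n))$, so the previous paragraph applies with $p$ replaced by $\poly(p)$ and still yields blow-up $2^{\Omega(n)}$, the polynomial factor being absorbed into the $\Omega(\cdot)$. (Alternatively one can prove an affine analogue of Lemma~\ref{lem:main} outright: under a monotone affine projection a monomial of $g$ maps to a product of non-negative affine forms, whose Newton polytope is the Minkowski sum of the simplices that are those forms' Newton polytopes, and since a totally ordered semi-ring has characteristic zero and non-negative terms cannot cancel, every vertex of that Minkowski sum is realized by a monomial of $f$; hence $\New(f)$ is an affine image of a face of a ``blown-up'' cycle cover polytope with only $\poly(m,n)$ facets, and the same numerical conclusion follows.)

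The one step I expect to be the real obstacle is invoking---or, for a self-contained treatment, reproving---the exponential extension-complexity lower bound for the traveling salesman polytope in precisely the form needed here: one must use the directed / cyclic-permutation polytope sitting in $\R^{n^2}$ rather than the symmetric TSP polytope, and must track constants through the reduction from the cut or correlation polytope so that the rate is genuinely $2^{\Omega(n)}$ in $n$. Everything else---the two polytope identifications, the $O(m^2)$ facet count of the Birkhoff polytope, and the monotone gadget reduction from affine to simple projections---is routine bookkeeping given Lemma~\ref{lem:main}. The same template then yields the companion statements, substituting the cut polytope's $2^{\Omega(n)}$ extension complexity for $\Cut^q$ and Rothvo{\ss}'s perfect matching polytope bound for the unsigned Pfaffian.
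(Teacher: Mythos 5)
Your approach is the same as the paper's: identify $\New(\perm_m)$ as the Birkhoff/cycle cover polytope with facet complexity $O(m^2)$, identify $\New(HC_n)$ as the (directed) TSP polytope, apply Lemma~\ref{lem:main} to force $xc(\New(HC_n)) \le c(\New(\perm_{p(n)}))$, and plug in a known extension-complexity lower bound. Your detour-vertex gadget (with a weight-$1$ self-loop and arcs of weights $b_{ij,k}$ and $x_k$) for passing from affine to simple monotone projections is a clean and correct variant of the paper's parallel-edge-subdivision gadget; both are standard, both are monotone, and both incur only polynomial blow-up.

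The one substantive issue is exactly the one you flagged: the extension-complexity source. Fiorini--Massar--Pokutta--Tiwary--de~Wolf prove $2^{\Omega(n)}$ for the correlation/cut polytope, but their reduction to the TSP polytope passes through a quadratically smaller instance, so what it yields for $\mathrm{TSP}_n$ is only $2^{\Omega(\sqrt{n})}$. That already rules out a monotone \emph{p-}projection, but it does not give the $2^{\Omega(n)}$ blow-up asserted in the theorem. The paper instead invokes Rothvo{\ss}'s Corollary~2, which deduces $xc(\mathrm{TSP}_n) \ge 2^{\Omega(n)}$ from his $2^{\Omega(n)}$ lower bound on the perfect matching polytope; that is the citation you need to hit the stated rate. (You already cite Rothvo{\ss} for the perfect matching polytope in the companion theorem, so the fix is just to use the same paper's TSP corollary here.) Everything else in your argument is sound and matches the paper.
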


We will need the following folklore lemma; as we could not find a proof in the literature, we a sketch its well-known proof here for completeness, and so that it can be verified that the standard proof preserves monotonicity.

\begin{lemma}[Folklore] \label{lem:perm}
If an $n$-variable polynomial is an affine projection of the $m \times m$ permanent, then it is a simple projection of the $N \times N$ permanent for $N = m + (n+1)m^2$. The same holds with ``affine projection'' replaced by ``monotone affine projection'' in both places.
\end{lemma}

\begin{proof}[Proof sketch]
Let $\pi_{ij}(\vec{x})$ be the affine linear function corresponding to the variable $y_{ij}$ of the $m \times m$ permanent, and write $\pi_{ij} = a_0 + a_1 x_1 + \dotsb + a_n x_n$. Let $G$ be the complete directed graph with loops on $m$ vertices and edge weights $y_{ij}$. Replace the edge $(i,j)$ by $n+1$ parallel edges with weights $a_0, a_1 x_1, \dotsb, a_n x_n$. Add a new vertex on each of these parallel edges, splitting each parallel edge into two. For the edge weighted $a_0$, the two edges have weights $1,a_0$, and for the remaining edges the new edges get weights $a_i, x_i$. On each of the added vertices, add a self-loop of weight 1. 
Now consider the permanent of the weighted adjacency matrix of this 
edge-weighted directed graph.  %
It is a simple 
exercise to see that this has the desired effect. 
Note also that if the original affine projection $\pi$ was monotone, then so is the constructed simple projection.
\end{proof}

\begin{proof}[Proof of \expref{Theorem}{thm:HC}] %
By \expref{Lemma}{lem:perm}, it suffices to show the result for simple projections. 
If the Hamiltonian Cycle polynomial were a monotone projection of the permanent, then by the Main Lemma, some face of $\New(\perm)$ would be an extension of $\New(\HC)$. 

The Newton polytope of the permanent is the cycle cover polytope (see \expref{Section}{sec:prelim}). The cycle cover polytope can easily be described by the $m^2$ inequalities
asserting
that all variables $x_{i,j}$ are non-negative, together with the equalities
asserting
that each vertex has in-degree and out-degree exactly 1, namely $\sum_i x_{i,j} = 1$ for all $j$ and $\sum_j x_{i,j} = 1$ for 
all $i$. (It is easy to see that these are necessary; for 
sufficiency, see, \eg, \cite[Theorem~18.1]{schrijver}.) 
Since equalities do not count towards the complexity of a polytope,
we have $\compl(\New(\perm_m)) \leq m^2$.

But the Newton polytope of the $n$-th Hamiltonian Cycle polynomial is exactly the TSP polytope, 
which, by a recent result by
\rothvoss~\cite[Corollary~2]{rothvoss},
requires extension complexity $2^{\Omega(n)}$.
\end{proof}

\begin{theorem}[Main Lemma +
\rothvoss's
perfect matching polytope lower bound] \label{thm:matching} %
Over any totally ordered semi-ring, the perfect matching polynomial (or ``unsigned Pfaffian'') is not a monotone affine p-projection of the permanent; in fact, any monotone affine projection from the permanent to the perfect matching polynomial has blow-up at least $2^{\Omega(n)}$.
\end{theorem}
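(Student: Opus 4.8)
The plan is to follow the proof of Theorem~\ref{thm:HC} essentially verbatim, replacing the TSP polytope by the perfect matching polytope of the complete graph $K_{2n}$. First I would dispose of the affine-versus-simple distinction exactly as there: the gadget recalled at the start of that proof converts a monotone affine projection of the $m\times m$ permanent into a monotone simple projection of the $(n+1)m\times(n+1)m$ permanent, so it suffices to rule out monotone simple projections with the stated blow-up, and the affine statement then follows.

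Write $\mathrm{PM}_n$ for the $n$-th perfect matching polynomial. Suppose $\mathrm{PM}_n$ is a monotone simple projection of $\perm_m$. By Lemma~\ref{lem:main}, some face of $\New(\perm_m)$ is an extension of $\New(\mathrm{PM}_n)$; since that face is cut out of the cycle cover polytope by adding coordinate equalities (which do not count toward complexity), and since $c(\New(\perm_m)) \le m^2$ by the description of the cycle cover polytope used in Theorem~\ref{thm:HC}, we get $xc(\New(\mathrm{PM}_n)) \le m^2$.

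The remaining point is to identify $\New(\mathrm{PM}_n)$. Using the second, fraction-free expression for the perfect matching polynomial, $\mathrm{PM}_n = \sum_{M}\prod_{\{i,j\}\in M} x_{ij}$, where the sum ranges over all perfect matchings $M$ of $K_{2n}$, the variables are the $x_{ij}$ with $1\le i<j\le 2n$, and each matching contributes a single squarefree monomial with coefficient $1$. Distinct matchings give distinct monomials, and (as in the Observation, using that a totally ordered semi-ring has characteristic zero) no cancellation occurs, so the exponent vectors occurring in $\mathrm{PM}_n$ are exactly the $\{0,1\}$-incidence vectors of perfect matchings of $K_{2n}$; hence $\New(\mathrm{PM}_n)$ is precisely the perfect matching polytope of $K_{2n}$. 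By the main result of \cite{rothvoss}, this polytope has extension complexity $2^{\Omega(n)}$. Combining with the previous paragraph, $m^2 \ge 2^{\Omega(n)}$, so $m \ge 2^{\Omega(n)}$, which gives the claimed blow-up (and, as noted in the remark preceding the theorems, the whole argument takes place in a real vector space and so applies over any totally ordered semi-ring).

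I expect the only real obstacle to be the bookkeeping in the previous paragraph---pinning down exactly which polytope $\New(\mathrm{PM}_n)$ is, given the somewhat ad hoc definition of $\mathrm{PM}_n$, and in particular handling the normalizing constant $\tfrac{1}{2^n n!}$ (which is why the fraction-free form is convenient) and checking that every vertex of the perfect matching polytope really does arise as an exponent vector. Once that identification is in hand, the proof is the one for the Hamiltonian Cycle polynomial, with the matching-polytope lower bound of \cite{rothvoss} substituted for the TSP bound.
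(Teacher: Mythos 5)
Your proof is correct and follows the paper's approach exactly: the paper's own proof is the one-line observation that the argument of Theorem~\ref{thm:HC} carries over verbatim with the perfect matching polytope in place of the TSP polytope and Rothvo{\ss}'s Theorem~1 in place of his Corollary~2. The only slight over-caution in your write-up is the appeal to characteristic zero when identifying $\New(\mathrm{PM}_n)$ --- since each perfect matching of $K_{2n}$ contributes its own \emph{distinct} monomial with coefficient $1$ in the fraction-free form, no cancellation can occur over any semi-ring, so characteristic plays no role here (unlike in the Observation, where $\ell_n$ could identify distinct vertices).
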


\begin{proof}
The proof is the same as for the Hamiltonian Cycle polynomial, 
using~\cite[Theorem~1]{rothvoss} by
\rothvoss,
which gives a lower bound of $2^{\Omega(n)}$ on the extension complexity of the perfect matching polytope, which is the Newton polytope of the perfect matching polynomial.
\end{proof}

\begin{theorem}[Main Lemma + Fiorini \emph{et al.}'s cut polytope lower bound] \label{thm:cut} %
Over any totally ordered semi-ring, for any $q$, the $q$-th cut polynomial is not a monotone affine p-projection of the permanent; in fact, any monotone affine projection from the permanent to the $q$-th cut polynomial has blow-up at least $2^{\Omega(n)}$. 
\end{theorem}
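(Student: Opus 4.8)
The plan is to run the proof of Theorems~\ref{thm:HC} and~\ref{thm:matching} essentially verbatim, with a cut polytope playing the role of the TSP or perfect matching polytope. We may assume $q \ge 2$, since $\Cut^1_n = \sum_{A \subseteq [n]} 1 = 2^n$ is a constant and the statement is uninteresting (indeed false) for $q=1$. First I would reduce to simple projections exactly as at the start of the proof of Theorem~\ref{thm:HC}: $\Cut^q_n$ has $N := n(n-1)$ variables, so any monotone affine projection expressing $\Cut^q_n$ from $\perm_m$ yields a monotone \emph{simple} projection expressing $\Cut^q_n$ from $\perm_{(N+1)m}$. By the Main Lemma (Lemma~\ref{lem:main}), some face of $\New(\perm_{(N+1)m})$ — the $(N+1)m$-th cycle cover polytope, which satisfies $c(\New(\perm_{(N+1)m})) \le ((N+1)m)^2$ as in the proof of Theorem~\ref{thm:HC} — is an extension of $\New(\Cut^q_n)$, so that $xc(\New(\Cut^q_n)) \le ((N+1)m)^2$. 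It then remains only to exhibit an exponential lower bound on $xc(\New(\Cut^q_n))$.

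The next step is to identify this Newton polytope. Every monomial of $\Cut^q_n$ has the form $\prod_{i \in A,\, j \notin A} x_{ij}^{q-1}$ for some $A \subseteq [n]$, with exponent vector $(q-1)\,\vec\delta(A)$, where $\vec\delta(A) \in \{0,1\}^{n(n-1)}$ is the incidence vector of the directed cut $\{(i,j) : i \in A,\, j \notin A\}$; since all coefficients equal $1$ and sums of positive semiring elements are positive (characteristic zero, as in the proof of the Observation), every such vector actually occurs. Hence $\New(\Cut^q_n)$ is exactly $(q-1)$ times the directed cut polytope $\vec{\mathrm{CUT}}_n := \mathrm{conv}\{\vec\delta(A) : A \subseteq [n]\}$. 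Dilation by the positive integer $q-1$ is an affine isomorphism of $\R^{n(n-1)}$, hence preserves extension complexity, so $xc(\New(\Cut^q_n)) = xc(\vec{\mathrm{CUT}}_n)$. Finally, the linear map sending coordinates $(i,j)$ and $(j,i)$ to their sum carries $\vec{\mathrm{CUT}}_n$ onto the ordinary (undirected) cut polytope $\mathrm{CUT}_n$ — its image contains every undirected cut vector $\delta(A)$, hence equals $\mathrm{CUT}_n$ — so any extension of $\vec{\mathrm{CUT}}_n$ is also an extension of $\mathrm{CUT}_n$ and thus $xc(\mathrm{CUT}_n) \le xc(\vec{\mathrm{CUT}}_n)$. (Alternatively, defining the cut polynomial with symmetric variables $x_{ij} = x_{ji}$ makes $\New(\Cut^q_n)$ literally a dilate of $\mathrm{CUT}_n$, skipping this paragraph's last step.)

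At this point I would invoke the known $2^{\Omega(n)}$ lower bound on the extension complexity of the cut polytope $\mathrm{CUT}_n$ — equivalently, via the covariance map relating it to $\mathrm{CUT}_{n+1}$, of the correlation polytope $\mathrm{conv}\{b b^{\mathsf T} : b \in \{0,1\}^n\}$ — due to Fiorini, Massar, Pokutta, Tiwary, and de Wolf. Combining this with the inequality from the first paragraph and $N = n(n-1) = \mathrm{poly}(n)$ gives $((N+1)m)^2 \ge xc(\vec{\mathrm{CUT}}_n) \ge xc(\mathrm{CUT}_n) \ge 2^{\Omega(n)}$, hence $m \ge 2^{\Omega(n)}$, which is precisely the claimed blow-up bound; in particular $\Cut^q$ is not a monotone affine p-projection of the permanent over any totally ordered semi-ring (cf.\ the Remark preceding Theorem~\ref{thm:HC}, which lets us treat all such semi-rings uniformly).

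I do not expect a serious obstacle: structurally this is the same three-step argument as Theorems~\ref{thm:HC} and~\ref{thm:matching}, and all the analytic weight rests on the cut/correlation polytope lower bound, which is cited. The only points that need a moment's care are (i) excluding the degenerate $q=1$ case; (ii) observing that the exponent vectors of $\Cut^q$ form a uniformly dilated copy of a cut polytope and that dilation does not affect extension complexity; and (iii) reconciling the directed and undirected versions of the cut polytope (or sidestepping it via symmetric variables).
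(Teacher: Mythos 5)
Your proof is correct and takes essentially the same route as the paper: reduce to simple projections, apply the Main Lemma to get an extended-formulation upper bound, identify $\New(\Cut^q_n)$ as a $(q-1)$-dilate of a cut polytope, and invoke the known exponential extension-complexity lower bound for cut/correlation polytopes. You are somewhat more careful than the paper on two minor points it glosses over — excluding the degenerate $q=1$ case, and reconciling the directed cut polytope (which is what the Newton polytope literally is, given the variables $x_{ij}$ with $i \neq j$) with the undirected cut polytope for which the cited lower bound is stated — but these are refinements of the same argument rather than a different approach.
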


\begin{proof}
Use~\cite[Theorem~7]{EF} by Fiorini \emph{et al.}
which says that $\xc(\New(\Cut^2)) \geq 2^{\Omega(n)}$, as $\New(\Cut^2)$ is the cut polytope. The one additional observation we need is that $\New(\Cut^q)$ is just the $(q-1)$-scaled version of $\New(\Cut^2)$, and this rescaling does not affect the extension complexity.
\end{proof}

\subsection{Monotone formula and circuit lower bounds} \label{sec:formula}
As pointed out by an anonymous reviewer, the universality of the permanent 
for formula size %
also holds in the monotone setting, so lower bounds on monotone projections from the permanent imply the same lower bounds on monotone formula size, and therefore quasi-polynomially related lower bounds on monotone circuit size. We assume circuits only have gates of bounded fan-in; with unbounded fan-in, rather than losing a factor of a half in the exponent of the exponent, we lose a factor of a third.

\begin{proposition} \label{prop:formula}
Any polynomial computable by a monotone formula of size $s$ is a monotone projection of $\perm_{s+1}$.
\end{proposition}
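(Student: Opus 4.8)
The plan is to carry out the monotone analogue of Valiant's universality construction \cite{valiant} for the permanent, while tracking two things the non-monotone version ignores: that every constant produced is non-negative, and that the matrix has side length at most $s+1$. Recall from Section~\ref{sec:prelim} that $\perm_N(M)=\sum_{C}\prod_{(i,j)\in C}M_{ij}$, where $C$ ranges over the directed cycle covers (using loops) of the complete digraph on $N$ vertices, so that a projection of $\perm_N$ is exactly a choice of an $N\times N$ matrix $M$ each of whose entries is a variable or a constant, and the projection is monotone precisely when each constant entry is $\ge 0$. Thus it suffices to produce, from a monotone formula $\Phi$ of size $s$, an $(s+1)\times(s+1)$ matrix $M$ with entries among $\{0,1\}$, the (non-negative) constants of $\Phi$, and the variables of $\Phi$, such that $\perm_{s+1}(M)=\Phi$ as formal polynomials.

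First I would build, by induction on $\Phi$, a weighted directed acyclic graph $G_\Phi$ with a source $a$ and a sink $b$ (no edges into $a$, none out of $b$), no multiple edges, every edge weighted by a variable, by a non-negative constant of $\Phi$, or by $1$, and with $\sum_{P}\prod_{e\in P}w(e)=\Phi$ as formal polynomials, the sum being over directed $a$-$b$ paths $P$. A leaf labelled $\ell$ becomes a single edge $a\to b$ of weight $\ell$. For $\Phi=\Phi_1\times\Phi_2$, glue $G_{\Phi_1},G_{\Phi_2}$ in series by identifying the sink of the first with the source of the second (this vertex becomes internal), keeping the outer source and sink; every $a$-$b$ path factors uniquely through the two pieces, so the path-sums multiply. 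For $\Phi=\Phi_1+\Phi_2$, glue in parallel by identifying the two sources and the two sinks; the $a$-$b$ paths are the disjoint union of those of the pieces, so the path-sums add. Parallel gluing creates a double edge $a\to b$ exactly when both pieces already contain that edge; in that case first subdivide one of them, replacing $a\to b$ of weight $\ell$ by $a\to z\to b$ of weights $\ell$ and $1$ at a fresh vertex $z$ (no change to the path-sum), and then glue. Series and parallel gluing and subdivision preserve acyclicity, never identify a source with its own sink, and never create a loop, so all the invariants persist.

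Next comes the vertex count. Apart from $a$ and $b$, the graph $G_\Phi$ has one internal vertex per $\times$-gate and one per subdivision, and there is at most one subdivision per $+$-gate, so $G_\Phi$ has at most $2+(\#\times\text{-gates})+(\#+\text{-gates})=(\#\text{gates})+2=s+1$ vertices, where $s$ is the number of leaves of $\Phi$ (identifying $a$ with $b$ into a single base-point removes one vertex and also yields the bound $s+1$ under the ``number of gates'' convention, if that is meant). Pad $G_\Phi$ with isolated, weight-$1$ self-looped vertices until it has exactly $s+1$ vertices; this multiplies every cycle cover by $1$. Now form $G_\Phi'$ by adding the edge $b\to a$ of weight $1$ and a weight-$1$ self-loop at every vertex other than $a$ and $b$, and let $M$ be its $(s+1)\times(s+1)$ weighted adjacency matrix, with $0$ in non-edge positions (this is well defined: $G_\Phi$ has no loops and no multiple edges, and $b\to a$ is the unique edge between those two vertices since $b$ is a sink in $G_\Phi$). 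In any cycle cover of $G_\Phi'$ the only edge into $a$ and the only edge out of $b$ is $b\to a$, so it is used; the cycle through it traces a directed $a$-$b$ path of $G_\Phi$, simple because $G_\Phi$ is acyclic, and the remaining vertices are covered by their self-loops; conversely each $a$-$b$ path extends uniquely to such a cycle cover. Hence $\perm_{s+1}(M)=\sum_{P}\prod_{e\in P}w(e)=\Phi$ as formal polynomials — an identity involving no cancellation, hence valid over every totally ordered semi-ring — and every entry of $M$ is a variable of $\Phi$, a constant of $\Phi$ (non-negative by monotonicity), $1$, or $0$, so $M$ realizes the desired monotone projection.

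The only delicate point is the vertex accounting: to land on exactly $s+1$ rather than $s+O(1)$ one must charge each parallel-edge repair to a distinct $+$-gate (which of the two copies is subdivided being immaterial), and pin down which notion of ``size'' is in force. One must also double-check that the series and parallel gluings never make a vertex adjacent to itself and never create a second edge between two vertices, so that $M$ is genuinely the adjacency matrix with entries exactly as claimed; everything else is a routine induction.
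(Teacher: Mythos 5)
Your proof is correct and is essentially the same construction the paper invokes: the paper simply cites B\"urgisser's Proposition~2.16 (the series--parallel DAG built from the formula, converted to a permanent via the cycle-cover correspondence) and observes that the construction is monotone, whereas you have unpacked that construction and verified the invariants (non-negative weights, no multi-edges, no self-loops before the final step, vertex count) explicitly. The only soft spot is the vertex accounting, where the exact bound $s+1$ depends on the size convention and on whether the source and sink are identified; you acknowledge this ambiguity, and it is harmless for the intended conclusion.
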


\begin{proof}
The proof of the universality of the permanent given in~\cite[Proposition~2.16]{burgisserBook} works \emph{mutatis mutandis} in the monotone setting.
\end{proof}

As a consequence of this, Theorems~\ref{thm:HC}--\ref{thm:cut} are nearly tight, since every monotone polynomial in $n$ variables of $\poly(n)$ degree can be written as a monotone formula of size $2^{O(n \log n)}$ (write it as a sum of monomials).

\begin{corollary}
Over any totally ordered semi-ring, any monotone formula computing the Hamiltonian Cycle polynomial, the perfect matching polynomial, or the $q$-th cut polynomial has size at least $2^{\Omega(n)}$. Consequently, any monotone circuit computing these polynomials has size at least $2^{\Omega(\sqrt{n})}$.
\end{corollary}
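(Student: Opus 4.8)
The plan is to get the formula bound immediately from the projection lower bounds already in hand, and then deduce the circuit bound by standard depth-reduction, checking only that monotonicity survives.

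For the monotone-formula lower bound, let $f_n$ be one of the three polynomials, in the size parametrization of Theorems~\ref{thm:HC}--\ref{thm:cut}, and suppose it were computed by a monotone formula of size $s$ over the given totally ordered semi-ring. By Proposition~\ref{prop:formula}, $f_n$ is then a monotone (simple) projection of $\perm_{s+1}$, and a simple projection is in particular an affine projection; so $f_n$ is a monotone affine projection of the permanent with blow-up $s+1$. The relevant one of Theorems~\ref{thm:HC}, \ref{thm:matching}, \ref{thm:cut} says every such projection has blow-up at least $2^{\Omega(n)}$, so $s + 1 \geq 2^{\Omega(n)}$, i.e.\ $s \geq 2^{\Omega(n)}$.

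For the monotone-circuit lower bound, I would first note that each of these polynomials has degree $\poly(n)$ (degree $n$ for the Hamiltonian Cycle and perfect matching polynomials, and $\Theta_q(n^2)$ for $\Cut^q$ with $q$ fixed). Given a monotone circuit of size $C$ computing $f_n$, apply the standard depth-reduction for arithmetic circuits of polynomial degree. This reduction only adds and multiplies previously computed values---it introduces no subtractions---so it is monotone-preserving, and it yields a monotone fan-in-$2$ circuit of depth $O(\log C \cdot \log d)$, where $d = \deg f_n$. Unfolding this into a tree gives a monotone formula of size $2^{O(\log C \cdot \log d)}$; since $C \geq n$ and $d = \poly(n)$ we have $\log d = O(\log C)$, so the formula has size $2^{O(\log^2 C)}$. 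Together with the formula bound above this gives $2^{O(\log^2 C)} \geq 2^{\Omega(n)}$, hence $\log^2 C \geq \Omega(n)$ and $C \geq 2^{\Omega(\sqrt{n})}$. (With unbounded fan-in one instead unfolds a depth-$\Delta$, size-$C$ circuit into a formula of size $C^{O(\Delta)} = 2^{O(\log^3 C)}$, giving the weaker $C \geq 2^{\Omega(n^{1/3})}$ of Section~\ref{sec:formula}.)

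No step here is a genuine obstacle---the real content is Proposition~\ref{prop:formula} together with Theorems~\ref{thm:HC}--\ref{thm:cut}---but two points deserve a line of care: that the depth-reduction invoked is subtraction-free and hence valid over an arbitrary totally ordered semi-ring, and that the degree of $\Cut^q$ be tracked in $q$ so that the ``$d = \poly(n)$'' step is legitimate for each fixed $q$.
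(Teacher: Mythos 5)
Your proof is correct and follows essentially the same route as the paper: combine Proposition~\ref{prop:formula} with Theorems~\ref{thm:HC}--\ref{thm:cut} for the formula bound, then apply monotone-preserving VSBR depth-reduction and circuit-to-formula unfolding for the circuit bound. The two points you flag for care (subtraction-freeness of the depth reduction over an arbitrary totally ordered semi-ring, and the degree of $\Cut^q$ being $\poly(n)$ for fixed $q$) are exactly the right things to check and match the paper's remarks.
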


For the cut polynomials, we believe this result to be new. For the other polynomials, this provides a new proof of (slightly weaker versions of) previously known lower bounds. Namely, Jerrum and Snir gave a lower bound of $(n-1)((n-2)2^{n-3}+1) = 2^{n+\Omega(\log n)}$ on the monotone circuit size of $\HC$~\cite[Section~4.4]{jerrumSnir}, and a lower bound of $n(2^{n-1}-1)$ on the monotone circuit size of the permanent~\cite[Section~4.3]{jerrumSnir}. As the permanent is a monotone projection of the perfect matching polynomial---namely, restrict the perfect matching polynomial to a bipartite graph, \eg, by setting $x_{ij}=0$ whenever $i$ and $j$ have the same parity---the same lower bound holds for the perfect matching polynomial.

\begin{proof}
The first part follows by combining \expref{Proposition}{prop:formula} with Theorems~\ref{thm:HC}--\ref{thm:cut}. The second part follows from the fact that monotone circuits of size $s$ can be balanced to have size $\poly(s)$ and depth $O(\log^2 s)$ (the proof in~\cite{VSBR} works \emph{mutatis mutandis} in the monotone setting), which can then be converted to monotone formulas of size $s^{O(\log s)} = 2^{O(\log^2 s)}$ by the usual conversion from bounded fan-in circuits to formulas. If there is a monotone circuit of size $s$ computing any of these polynomials, there is thus a monotone formula of size $2^{O(\log^2 s)}$, which must be at least $2^{\Omega(n)}$, so $s \geq 2^{\Omega(n^{1/2})}$.
\end{proof}

\section{Open questions} \label{sec:open}
Despite the common feeling that Razborov's super-polynomial lower bound~\cite{razborov} on monotone Boolean circuits for CLIQUE ``finished off'' monotone Boolean circuit lower bounds, several natural and interesting question remain. For example, does Directed $s$-$t$ Connectivity require monotone Boolean circuits of size $\Omega(n^3)$? (A matching upper bound is given by the Bellman--Ford algorithm.) Is there a monotone Boolean reduction from general perfect matching to bipartite perfect matching? A positive answer to the following question would rule out such monotone (projection) reductions.

\begin{open}
Extend \expref{Theorem}{thm:matching} from formal polynomials over the Boolean semi-ring to Boolean functions.
\end{open}

However, there are even easier questions, intermediate between the Boolean function case and the algebraic case considered in this paper; Jukna~\cite{juknaMonotone} discusses the notion of one polynomial ``counting'' another, which means that they agree on all $\{0,1\}$ inputs. 

\begin{open} \label{open:count}
Prove that no monotone polynomial-size projection of the permanent agrees with the perfect matching polynomial on all $\{0,1\}$ inputs (``counts the perfect matching polynomial''). Similarly, prove that no monotone polynomial-size projection of the permanent counts the Hamiltonian cycle polynomial.
\end{open}

S.\ Jukna points out (personal communication) that projections of the $s$-$t$ connectivity polynomial correspond, even in the Boolean setting, to switching-and-rectifier networks, so the known lower bounds on monotone switching-and-rectifier networks (see, \eg, the survey~\cite{razborovSRN}) imply that the Hamiltonian path polynomial and the permanent are not monotone p-projections of the $s$-$t$ connectivity polynomial, even over the Boolean semi-ring. This helps explain why the only known monotone lower bound on the $s$-$t$ connectivity polynomial that we are aware of~\cite{juknaMonotone} goes by a somewhat roundabout proof: Razborov's lower bound on CLIQUE~\cite{razborov}, followed by Valiant's reduction from the clique polynomial to the Hamiltonian path polynomial~\cite{valiant2}, followed by a standard reduction from Hamiltonian path to counting $s$-$t$ paths. In the course of discussing this, we were led to the following question; although the motivation for the question has since disappeared, it still seems like an interesting question about polytopes, whose answer may require new methods.

\begin{open}[S.\ Jukna, personal communication] \label{open:path}
Is the $m$-th $s$-$t$ path polytope an extension of the $n$-th TSP polytope (or $n$-th cycle cover polytope) with $m \leq \poly(n)$? 
\end{open}

Since the separation problem for the $s$-$t$ path polytope is $\cclass{NP}$-hard (see, \eg, \cite[\S 13.1]{schrijver})---and the cycle cover polytope has low (extension) complexity---answering this question negatively seems to require more subtle understanding of these polytopes than ``simply'' an extended formulation lower bound.

Another example of a natural polytope question with a similar flavor comes from the cut polynomials. In combination with B\"{u}rgisser's results and questions on the cut polynomials~\cite{burgisser} (discussed in \expref{Section}{sec:intro}), we are led to the following question.

\begin{open}
Is the $m$-th cut polytope an extension of the $n$-th TSP polytope, for $m \leq \poly(n)$?
\end{open}

A negative answer would show that $\Cut^q$ is not complete for non-negative polynomials in $\VNP_{\Q}$ under monotone p-projections, though as with the example of the permanent, this is not necessarily an obstacle to being $\VNP$-complete under general p-projections. Yet even the monotone completeness of the cut polynomials remains open. In fact, even more basic questions remain open:

\begin{open} \label{open:poscomplete}
Is every non-negative polynomial in $\cclass{VNP}$ a monotone projection of the Hamiltonian Cycle polynomial? Is there any polynomial that is ``positive $\cclass{VNP}$-complete'' in this sense?
\end{open}

To relate this to the current proofs of $\cclass{VNP}$-completeness of $\HC_n$, we need to draw a distinction. Let $\cclass{VP}_{\R}^{\geq 0}$ denote the polynomial families in $\cclass{VP}_{\R}$ all of whose coefficients are non-negative, and let $\cclass{mVP}_{\R}$ (``monotone $\cclass{VP}$'') denote the class of families of polynomials with polynomially many variables, of polynomial degree, and computable by polynomial-size \emph{monotone} circuits over $\R$. Similarly, define $\cclass{VNP}_{\R}^{\geq 0}$ to be the non-negative polynomials in $\cclass{VNP}_{\R}$, and $\cclass{mVNP}_\R$ to be the function families of the form
\[
  f_n = \sum_{\vec{e} \in \{0,1\}}^{\poly(n)} g_m(\vec{e}, \vec{x})\,,
\]
where $m \leq \poly(n)$ and $(g_m) \in \cclass{mVP}_\R$. 

Valiant's original completeness proof for the Hamiltonian Cycle polynomial~\cite{valiant2} is ``mostly'' monotone: it %
uses polynomial-size formulas for the coefficients of the monomials (coming from the definition of $\cclass{VNP}$), but otherwise is entirely monotone. In other words, the proof shows that $\HC$ is $\cclass{mVNP}$-hard under monotone projections. However, we note that it is %
not clear whether $\HC$ is even \emph{in} $\cclass{mVNP}$! \expref{Question}{open:poscomplete} asks whether $\HC$, or indeed any polynomial, is $\cclass{VNP}^{\geq 0}$-complete under monotone projections; the question of whether there exist polynomials that are $\cclass{mVNP}$-complete under monotone projections also seems potentially interesting.

Finally, we ask about stronger notions of monotone reduction, which seem to require a different kind of proof technique. Recall that a \definedWord{c-reduction} from $f$ to $g$ is a family of polynomial-size %
algebraic circuits for $f$ with oracle gates for $g$. 

\begin{open}
Do the analogues of Theorems~\ref{thm:HC}--\ref{thm:cut} hold for monotone bounded-depth c-reductions in place of affine p-projections? What about weakly-skew or even general monotone c-reductions?
\end{open}

\section{Subsequent developments}
Since the appearance of the preliminary version of this paper~\cite{prelim}, our Main \expref{Lemma}{lem:main} has been used 
by Mahajan and Saurabh~\cite{MS} %
to prove that several other polynomials of combinatorial and complexity-theoretic interest are not
subexponential-size   %
projections of the permanent. 
\begin{enumerate}
\item The $n$-th \definedWord{satisfiability polynomial} over $\F_q$ is a polynomial in $n + 8\binom{n}{3}$ variables denoted $X_1, \dotsc, X_n$ and $\{Y_c : c \in C_n\}$ where $C_n$ denote the set of clauses on 3 literals in $n$ variables. It is defined as
\[
\Sat^q_n(X, Y) = \sum_{a \in \{0,1\}^n} \left(\prod_{i \in [n]} X_i^{q-1} \right) \left( \prod_{c \in C_n : c(a) = 1} Y_c^{q-1} \right)\,.
\]

\item A \definedWord{clow} in an $n$-vertex graph is a closed walk of length exactly $n$, in which the minimum-numbered vertex appears exactly once. The $n$-th \definedWord{clow polynomial} over $\F_q$ is a polynomial in $\binom{n}{2} + n$ variables $X_e$ for each edge $e$ in the complete undirected graph $K_n$ on $n$ vertices and $Y_v$ for each $v \in [n]$. It is defined as
\[
\Clow^q_n(X,Y) = \sum_{w : \text{clow of length $n$}} \left(\prod_{e : \text{edges in $w$}} X_e^{q-1} \right) \left(\prod_{v : \text{distinct vertices in $w$}} Y_v^{q-1} \right),
\]
or more precisely,
\[
\Clow^q_n(X,Y) = \sum_{w = [v_0, \dotsc, v_{n-1}]} \left( \prod_{i \in [n]} X_{(v_{i-1},v_{i \text{ mod } n})}^{q-1} \right) \left(\prod_{v \in \{v_0, \dotsc, v_{n-1}\}} Y_v^{q-1} \right),
\]
where the sum is over clows $w$ and $v_0$ denotes the minimum-numbered vertex in $w$.

\item The \definedWord{clique polynomial} is a polynomial in $\binom{n}{2}$ variables $X_e$:
\[
\Clique_n(X) = \sum_{\substack{T \subseteq \binom{[n]}{2}\\ \text{$T$ is a clique in $K_n$}}} \prod_{e \in T} X_e\,.
\]
\end{enumerate}

\begin{theorem*}[{Mahajan and Saurabh~\cite[Theorems~2 and 6]{MS}}]
Over any totally ordered semi-ring, any monotone affine projection from the permanent to $\Sat^q_n$ or to the clique polynomial requires blow-up at least $2^{\Omega(\sqrt{n})}$. Any monotone affine projection from the permanent to $\Clow^q_n$ requires blow-up at least $2^{\Omega(n)}$.
\end{theorem*}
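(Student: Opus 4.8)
The plan is to mirror the proofs of Theorems~\ref{thm:HC}--\ref{thm:cut}; the only genuinely new ingredients are identifying the three Newton polytopes and supplying the matching extension-complexity lower bounds. First I would reduce monotone affine projections to monotone simple projections exactly as in the proof of Theorem~\ref{thm:HC}: an affine monotone projection of $\perm_m$ with blow-up $b$, onto a polynomial in $N$ variables, becomes a simple monotone projection of $\perm_{(N+1)b}$, and since here $N$ is polynomial in the index ($N = n + 8\binom{n}{3}$ for $\Sat^q_n$, $N = \binom{n}{2} + n$ for $\Clow^q_n$, and $N = \binom{n}{2}$ for $\Clique_n$), this affects only the constants in the final exponent. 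Then, whenever the target polynomial $f$ is a monotone simple projection of $\perm_m$, the Main Lemma together with $c(\New(\perm_m)) \le m^2$ gives $xc(\New(f)) \le m^2$; so a monotone affine projection with blow-up $b$ forces $((N+1)b)^2 \ge xc(\New(f))$, and a lower bound $xc(\New(f)) \ge 2^{\Omega(k)}$ with $k = \Omega(\sqrt n)$ therefore yields blow-up $b \ge 2^{\Omega(k)}$.

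Second, I would pin down the Newton polytopes, using the two facts already exploited for $\Cut^q$: rescaling a polytope by the positive integer $q-1$, and applying an affine change of coordinates, both leave the extension complexity unchanged. In $\Sat^q_n$ the variables $X_1, \dots, X_n$ occur to the fixed power $q-1$ in every monomial, so they merely translate the polytope; what remains is, up to that translation and the $(q-1)$-scaling, the convex hull of the clause-satisfaction incidence vectors $\{(\mathbf 1[c(a)=1])_{c \in C_n} : a \in \{0,1\}^n\}$, which I will call the $3$-SAT polytope. In $\Clow^q_n$ every variable occurs to power $q-1$ or $0$, so $\New(\Clow^q_n)$ is $(q-1)$ times the convex hull of the $\{0,1\}$-incidence vectors of length-$n$ clows in $K_n$, the clow polytope. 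And $\New(\Clique_n) = \mathrm{conv}\{\mathbf 1_{\binom{S}{2}} : S \subseteq [n]\}$ is the clique polytope of $K_n$, a close relative of the $n$-element correlation (Boolean-quadric) polytope.

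Third come the extension-complexity lower bounds. For the clow polytope no new polytope combinatorics is needed: the length-$n$ clows visiting all $n$ vertices are exactly the Hamiltonian cycles, so the set of points of $\New(\Clow^q_n)$ at which every $Y_v$-coordinate attains its maximum value $q-1$ is a face of $\New(\Clow^q_n)$ equal, up to the $(q-1)$-scaling, to the TSP polytope $\New(HC_n)$; since the extension complexity of a face is at most that of the ambient polytope, $xc(\New(\Clow^q_n)) \ge xc(\New(HC_n)) \ge 2^{\Omega(n)}$ by \cite[Corollary~2]{rothvoss}, and the reduction above turns this into blow-up $2^{\Omega(n)}$. For the clique polytope one can likewise extract a coordinate face: reserve two auxiliary vertices $d_1, d_2$ and intersect $\New(\Clique_n)$ with the hyperplane forcing the edge-coordinate $X_{\{d_1,d_2\}}$ to its maximum; this forces $d_1$ and $d_2$ into every clique in the support, so the coordinate $X_{\{v, d_1\}}$ restricted to the face equals $a_v$, and the face is affinely isomorphic to the $(n-2)$-element correlation polytope. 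Invoking the $2^{\Omega(k)}$ lower bound on the extension complexity of the $k$-element correlation polytope (the unique-disjointness bound of \cite{EF}) gives $xc(\New(\Clique_n)) \ge 2^{\Omega(n)}$, a fortiori the claimed $2^{\Omega(\sqrt n)}$. For the $3$-SAT polytope I would again aim to realize a correlation polytope on a coordinate face, this time using $3$-clauses with dummy literals to express the quadratic monomials $a_i a_j$, and then invoke the same lower bound.

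The step I expect to be the main obstacle is this last one, for $\Sat^q_n$. Unlike an edge-coordinate of the clique polytope, a single $3$-clause coordinate depends on three variables at once and so cannot pin one literal to a fixed value by a face; extracting clean products $a_i a_j$ therefore calls for a more elaborate arrangement of clauses, and the overhead of that arrangement is what forces the embedded correlation polytope to have only $\Theta(\sqrt n)$ elements --- which is where the weaker $2^{\Omega(\sqrt n)}$ bound for $\Sat^q_n$ comes from. The remaining verifications --- that the indicated sets of points are genuinely faces, and that the induced coordinate maps are the claimed affine isomorphisms --- are routine and parallel to those in the proofs of Theorems~\ref{thm:HC}--\ref{thm:cut}.
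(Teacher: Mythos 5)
The paper does not prove this theorem itself; it quotes it from Mahajan and Saurabh \cite{MS}, so there is no internal proof to compare against, and your proposal is an independent reconstruction. Your framework is the right one: reduce affine to simple monotone projections exactly as in Theorem~\ref{thm:HC}, apply Lemma~\ref{lem:main} to convert a monotone simple projection of $\perm_m$ into the inequality $xc(\New(f)) \leq m^2$, and then supply an extension-complexity lower bound by identifying a hard face. Your face arguments for $\Clow^q_n$ (force every $Y_v$-coordinate to its maximum $q-1$; a length-$n$ closed walk that visits all $n$ vertices is a Hamiltonian cycle, so this face is the TSP polytope scaled by $q-1$) and for $\Clique_n$ (force $X_{\{d_1,d_2\}}=1$; the resulting face is affinely isomorphic to the Boolean-quadric/correlation polytope on $n-2$ elements) are correct and complete. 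The minor imprecision in your description of $\New(\Clow^q_n)$---an edge may be traversed more than once, so its exponent need not be $0$ or $q-1$---does not affect the face you actually use. Your clique argument in fact yields $2^{\Omega(n)}$, which is at least as strong as the stated $2^{\Omega(\sqrt n)}$.

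The genuine gap is $\Sat^q_n$. You correctly identify that $\New(\Sat^q_n)$ is (after translating away the $X_i$-coordinates and rescaling by $q-1$) the convex hull of the clause-satisfaction incidence vectors, but you stop at the statement of intent: ``realize a correlation polytope on a coordinate face, this time using $3$-clauses with dummy literals.'' You never specify which clause coordinates to fix, why fixing them cuts out a face, or what affine map carries that face onto a correlation polytope of some size $k$. You yourself flag this as ``the main obstacle,'' and it is: the quantitative conclusion---why the embedded correlation polytope has only $\Theta(\sqrt n)$ elements, hence why the bound is $2^{\Omega(\sqrt n)}$ rather than $2^{\Omega(n)}$---is entirely governed by the details of this embedding, which are absent. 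As written, the $\Sat^q_n$ case is an unproved sketch rather than a proof.
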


As in \expref{Section}{sec:formula}, we get the following corollary. Again, we note that the lower bound on the clique polynomial over the Boolean semi-ring only works for the formal clique polynomial (in contrast to Razborov's result~\cite{razborov}, which works for any monotone Boolean circuit computing the CLIQUE \emph{function}).

\begin{corollary} \label{cor:MS}
Over any totally ordered semi-ring, any monotone formula computing $\Clow^q_n$ has size at least $2^{\Omega(n)}$ and any monotone circuit computing $\Clow^q_n$ has size at least $2^{\Omega(\sqrt{n})}$. Any monotone formula computing $\Sat^q_n$ or $\Clique_n$ has size at least $2^{\Omega(\sqrt{n})}$ and any monotone circuit computing these polynomials has size at least $2^{\Omega(n^{1/4})}$.
\end{corollary}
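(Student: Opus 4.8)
The plan is to follow verbatim the template used for the Corollary at the end of Section~\ref{sec:formula}, now feeding in the Mahajan--Saurabh blow-up bounds in place of Theorems~\ref{thm:HC}--\ref{thm:cut}. For the formula bounds, I would argue contrapositively: if $\Clow^q_n$ (resp.\ $\Sat^q_n$, resp.\ $\Clique_n$) were computable by a monotone formula of size $s$, then by Proposition~\ref{prop:formula} it would be a monotone projection of $\perm_{s+1}$, and in particular a monotone affine projection of the permanent with blow-up $s+1$. The quoted theorem of Mahajan and Saurabh then says this blow-up must be at least $2^{\Omega(n)}$ for $\Clow^q_n$ and at least $2^{\Omega(\sqrt{n})}$ for $\Sat^q_n$ and $\Clique_n$; hence $s \geq 2^{\Omega(n)}$ in the first case and $s \geq 2^{\Omega(\sqrt{n})}$ in the latter two, which is exactly the first half of the Corollary.

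For the circuit bounds I would invoke the monotone analogue of depth reduction already noted in Section~\ref{sec:formula}: the argument of \cite{VSBR} goes through \emph{mutatis mutandis} in the monotone setting, so a monotone circuit of size $s$ (computing a polynomial of polynomial degree, which holds for all three families for fixed $q$) can be rebalanced to a monotone circuit of size $\poly(s)$ and depth $O(\log^2 s)$, and then unfolded by the standard bounded-fan-in circuit-to-formula conversion into a monotone formula of size $s^{O(\log s)} = 2^{O(\log^2 s)}$. Thus a size-$s$ monotone circuit for $\Clow^q_n$ yields a monotone formula of size $2^{O(\log^2 s)}$, which by the previous paragraph must be at least $2^{\Omega(n)}$, forcing $\log^2 s = \Omega(n)$ and so $s \geq 2^{\Omega(\sqrt{n})}$; similarly a size-$s$ monotone circuit for $\Sat^q_n$ or $\Clique_n$ gives $2^{O(\log^2 s)} \geq 2^{\Omega(\sqrt{n})}$, hence $\log^2 s = \Omega(\sqrt{n})$ and $s \geq 2^{\Omega(n^{1/4})}$.

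I do not expect any genuine obstacle here; the content is entirely in the Main Lemma and in the Mahajan--Saurabh application, and what remains is bookkeeping. The two points to be careful about are: (i) the blow-up parameter for the permanent family is precisely $s+1$ where $s$ is the formula size, so the Mahajan--Saurabh bounds — which are stated as functions of the family index $n$ rather than of the ambient variable count $n + 8\binom{n}{3}$, $\binom{n}{2}+n$, or $\binom{n}{2}$ — transfer directly to $s$ without any rescaling of the exponent; and (ii) as at the head of Section~\ref{sec:formula}, the circuit-to-formula step is stated for bounded fan-in, and with unbounded fan-in one loses a factor of a third rather than a half in the final exponent-of-the-exponent. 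Finally, exactly as in the surrounding discussion, over the Boolean semi-ring these lower bounds apply only to the formal clique polynomial and not to arbitrary monotone circuits computing the CLIQUE \emph{function}, since over $\B$ distinct formal polynomials may compute the same Boolean function.
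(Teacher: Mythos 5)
Your proof is correct and is essentially identical to the paper's own argument: the paper proves Corollary~\ref{cor:MS} by the sentence ``As in Section~\ref{sec:formula}'', i.e.\ exactly the template you reproduce — Proposition~\ref{prop:formula} to convert a size-$s$ monotone formula into a monotone projection of $\perm_{s+1}$, the Mahajan--Saurabh blow-up bounds to get the formula lower bounds, and monotone VSBR depth reduction plus circuit-to-formula unfolding to derive the circuit bounds with the square-root loss in the exponent.
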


For the clow and satisfiability polynomials, we believe this result to be new. For the clique polynomials, this provides a new proof of a weaker version of %
the exponential monotone circuit lower bound due to Schnorr~\cite{schnorr}.\footnote{Schnorr showed a $\binom{n}{k}-1$ lower bound on the monotone circuit size of the the $k$-th clique polynomial $\Clique^k_n$, the sum over all cliques of size $k$, rather than all cliques. For $k=n/2$, this lower bound is asymptotically equal to $2^n / \sqrt{\pi n/2}$. The $k$-th clique polynomial $\Clique_n^k$ is the degree $k$ homogeneous component of the clique polynomial $\Clique_n$; by homogenization (implicit in Strassen's work, explicit in Valiant~\cite[Lemma~2]{valiantNeg}), any monotone circuit of size $s$ for $\Clique_n$ can be converted into a monotone circuit of size $s(n/2+1)^2$ computing $\Clique^{n/2}_n$. Thus Schnorr's result implies a lower bound of $\Omega(2^n/n^{5/2})$ on the monotone circuit complexity of $\Clique_n$. }

\bigskip\noindent
\textbf{Acknowledgment.} We would like to thank Stasys Jukna for the question that motivated this paper~\cite{juknaQ}, and \url{cstheory.stackexchange.com} for providing a forum for the question. We also thank Stasys for comments on a draft, pointing out the paper by Schnorr~\cite{schnorr}, and interesting discussions leading to Questions~\ref{open:count} and \ref{open:path}. We thank Leslie Valiant for an interesting conversation that led to \expref{Question}{open:poscomplete}. We thank Ketan Mulmuley and Youming Qiao for collaborating on~\cite{GMQ}, which is why the author had Newton polytopes on the mind. We thank an anonymous reviewer for pointing out the monotone universality of the permanent and therefore the implications for monotone formula and circuit size (\expref{Section}{sec:formula} and \expref{Corollary}{cor:MS}). We thank Laci Babai for detailed comments on the journal submission. %

\bibliographystyle{tocplain}   %
\bibliography{v013a018}

\begin{tocauthors}
\begin{tocinfo}[grochow] %
 Joshua A. Grochow\\
 Assistant professor \\
 Departments of Computer Science and Mathematics \\
 University of Colorado, Boulder, CO, USA \\
 jgrochow\tocat{}colorado\tocdot{}edu \\
 \url{http://www.cs.colorado.edu/~jgrochow}
\end{tocinfo}
\end{tocauthors}

\begin{tocaboutauthors}
\begin{tocabout}[grochow]
\textsc{Joshua A. Grochow} is an Assistant Professor in the Departments of Computer Science and Mathematics and the University of Colorado, Boulder. Prior to that he was an Omidyar Postdoctoral Fellow at the \href{http://www.santafe.edu}{Santa Fe Institute}, and a postdoc in the \href{http://www.cs.toronto.edu/theory}{theory group at University of Toronto}. He graduated from The University of Chicago in 2012; his advisors were \href{http://lance.fortnow.com}{Lance Fortnow} and \href{https://www.cs.uchicago.edu/directory/ketan-mulmuley}{Ketan Mulmuley}. In addition to his interests in algebraic and geometric complexity theory, he is also interested in broader issues in complex networks and complex adaptive systems, sometimes referred to in our community as ``the other'' complexity theory.

\end{tocabout}
\end{tocaboutauthors}

\end{document}